\let\Pr\relax
\DeclareMathOperator*{\Pr}{\mathbb{P}}
\newcommand{\R}{\mathbb{R}}
\newcommand{\Fam}{\mathcal{F}}
\newcommand{\eps}{\varepsilon}
\newcommand{\tr}{\textrm{tr}}
\newtheorem{theorem}{Theorem}
\newtheorem{corollary}{Corollary}
\newtheorem{lemma}{Lemma}
\newtheorem{remark}{Remark}
\begin{document}

\date{}

\author{Kasper Green Larsen\thanks{Aarhus University. \texttt{larsen@cs.au.dk}. Supported by Center for Massive Data Algorithmics, a Center of the Danish National Research Foundation, grant DNRF84.}
  \and Jelani Nelson\thanks{Harvard University. \texttt{minilek@seas.harvard.edu}. Supported by NSF
  CAREER award CCF-1350670, NSF grant IIS-1447471, ONR grant N00014-14-1-0632, and a Google Faculty Research Award.}}

\title{The Johnson-Lindenstrauss lemma is optimal for linear dimensionality reduction}

\maketitle

\begin{abstract}
For any $n>1$ and $0<\eps<1/2$, we show the existence of an $n^{O(1)}$-point subset $X$ of $\R^n$ such that any linear map from $(X,\ell_2)$ to $\ell_2^m$ with distortion at most $1+\eps$ must have $m = \Omega(\min\{n, \eps^{-2}\log n\})$. Our lower bound matches the upper bounds provided by the identity matrix and the Johnson-Lindenstrauss lemma \cite{JL84}, improving the previous lower bound of Alon \cite{Alon03} by a $\log(1/\eps)$ factor.
\end{abstract}

\section{Introduction}
The Johnson-Lindenstrauss lemma \cite{JL84} states the following.

\begin{theorem}[JL lemma {\cite[Lemma 1]{JL84}}]
For any $N$-point subset $X$ of Euclidean space and any $0<\eps<1/2$, there exists a map $f:X\rightarrow\ell_2^m$ with $m = O(\eps^{-2}\log N)$ such that
\begin{equation}
\forall x,y\in X,\ (1-\eps)\|x-y\|_2^2 \le \|f(x)-f(y)\|_2^2 \le (1+\eps)\|x-y\|_2^2 . \label{eqn:jl-guarantee}
\end{equation}
\end{theorem}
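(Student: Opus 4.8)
The plan is to prove the lemma by the now-standard random projection method: exhibit a random \emph{linear} map that works with positive probability. Concretely, I would let $\Pi$ be the $m\times n$ matrix whose entries are i.i.d.\ $\Norm(0,1/m)$ random variables, and set $f(x) = \Pi x$ for every $x\in X$. Since $f$ is linear, $f(x)-f(y) = \Pi(x-y)$, so it suffices to control $\|\Pi z\|_2^2$ for the $\binom{N}{2}$ difference vectors $z = x-y$.

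First I would isolate the one-dimensional concentration statement. For a fixed unit vector $u\in\R^n$, each coordinate $(\Pi u)_i$ is $\Norm(0,1/m)$ and the coordinates are independent, so $m\|\Pi u\|_2^2$ is a $\chi^2$ random variable with $m$ degrees of freedom; in particular $\E\|\Pi u\|_2^2 = 1$. The crux is the tail bound
\[
\Pr\!\left[\,\bigl|\|\Pi u\|_2^2 - 1\bigr| > \eps\,\right] \;\le\; 2e^{-c\eps^2 m}
\]
for an absolute constant $c>0$ and all $0<\eps<1/2$. I would prove this by a Chernoff/Bernstein argument: compute the moment generating function $\E e^{\lambda(g^2-1)} = e^{-\lambda}/\sqrt{1-2\lambda}$ for a standard Gaussian $g$ and $0<\lambda<1/2$, tensorize over the $m$ independent coordinates, apply Markov to $e^{\lambda(m\|\Pi u\|_2^2 - m)}$ (and its negative-$\lambda$ counterpart for the lower tail), and optimize $\lambda$; this is exactly the Laurent--Massart estimate. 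This inequality is the only genuinely analytic step, and thus the main (though by now routine) obstacle — everything surrounding it is bookkeeping.

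To conclude, I would apply the tail bound to each unit vector $u_{x,y} = (x-y)/\|x-y\|_2$ over distinct pairs $x,y\in X$ (pairs with $x=y$ are trivial). By linearity and positive homogeneity, the bad event $\|f(x)-f(y)\|_2^2 \notin \bigl[(1-\eps)\|x-y\|_2^2,\,(1+\eps)\|x-y\|_2^2\bigr]$ is precisely $\bigl|\|\Pi u_{x,y}\|_2^2 - 1\bigr| > \eps$, which has probability at most $2e^{-c\eps^2 m}$. A union bound over the at most $N^2$ pairs bounds the total failure probability by $2N^2 e^{-c\eps^2 m}$, which is strictly less than $1$ once $m \ge C\eps^{-2}\log N$ for a sufficiently large absolute constant $C$. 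For such an $m = O(\eps^{-2}\log N)$ there is therefore a realization of $\Pi$ for which $f=\Pi$ satisfies \eqref{eqn:jl-guarantee} simultaneously for all $x,y\in X$, which proves the lemma.
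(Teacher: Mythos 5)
Your proposal is correct: the random Gaussian matrix with the chi-square (Laurent--Massart/Chernoff) tail bound and a union bound over the $\binom{N}{2}$ difference vectors is the standard proof of this lemma, and it is exactly the distributional-JL-plus-union-bound route the paper itself sketches in Remark~\ref{rmk:djl} (the paper only cites \cite{JL84} for this statement rather than reproving it, and the original argument there differs only in using a random orthogonal projection instead of i.i.d.\ Gaussian entries). No gaps; the moment-generating-function step you identify is indeed the only analytic content, and your union bound correctly yields $m = O(\eps^{-2}\log N)$ with a linear map $f=\Pi$.
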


We henceforth refer to $f$ satisfying \eqref{eqn:jl-guarantee} as {\em having the $\eps$-JL guarantee for $X$} (often we drop mention of $\eps$ when understood from context). The JL lemma has found applications in computer science, signal processing (e.g.\ compressed sensing), statistics, and mathematics. The main idea in algorithmic applications is that one can transform a high-dimensional problem into a low-dimensional one such that an optimal solution to the lower dimensional problem can be lifted to a nearly optimal solution to the original problem. Due to the decreased dimension, the lower dimensional problem requires fewer resources (time, memory, etc.) to solve. We refer the reader to \cite{Indyk01,Vempala04,Matousek08} for a list of further applications.

All known proofs of the JL lemma with target dimension as stated above in fact provide such a map $f$ which is {\em linear}. This linearity property is important in several applications. For example in the turnstile model of streaming \cite{Muthukrishnan05}, a vector $x\in\R^n$ receives a stream of coordinate-wise updates each of the form $x_i\leftarrow x_i + \Delta$, where $\Delta\in\R$. The goal is to process $x$ using some $m\ll n$ memory. Thus if one wants to perform dimensionality reduction in a stream, which occurs for example in streaming linear algebra applications \cite{ClarksonW09}, this can be achieved with linear $f$ since $f(x + \Delta\cdot e_i) = f(x) + \Delta\cdot f(e_i)$. In compressed sensing, another application where linearity of $f$ is inherent, one wishes to (approximately) recover (approximately) sparse signals using few linear measurements \cite{Donoho04,CT05}. The map $f$ sending a signal to the vector containing some fixed set of linear measurements of it is known to allow for good signal recovery as long as $f$ satisfies the JL guarantee for the set of all $k$-sparse vectors \cite{CT05}.

Given the widespread use of dimensionality reduction across several domains, it is a natural and often-asked question  whether the JL lemma is tight: does there exist some $X$ of size $N$ such that any such map $f$ must have $m = \Omega(\eps^{-2}\log N)$? The paper \cite{JL84} introducing the JL lemma provided the first lower bound of $m = \Omega(\log N)$ when $\eps$ is smaller than some constant. This was improved by Alon \cite{Alon03}, who showed that if $X = \{0,e_1,\ldots,e_n\}\subset\R^n$ is the simplex (thus $N = n+1$) and $0 < \eps < 1/2$, then any JL map $f$ must embed into dimension $m = \Omega(\min\{n, \eps^{-2}\log n/\log(1/\eps)\})$. Note the first term in the $\min$ is achieved by the identity map. Furthermore, the $\log(1/\eps)$ term cannot be removed for this particular $X$ since one can use Reed-Solomon codes to obtain embeddings with $m = O(1/\eps^2)$ (superior to the JL lemma) once $\eps \le n^{-\Omega(1)}$ \cite{Alon03} (see \cite{NNW14} for details). Specifically, for this $X$ it is possible to achieve $m = O(\eps^{-2}\min\{\log N, ((\log N)/\log(1/\eps))^2\})$. Note also for this choice of $X$ we can assume that any $f$ is in fact linear. This is because first we can assume $f(0) = 0$ by translation. Then we can form a matrix $A\in\R^{m\times n}$ such that the $i$th column of $A$ is $f(e_i)$. Then trivially $A e_i = f(e_i)$ and $A 0 = 0 = f(0)$. 

The fact that the JL lemma is not optimal for the simplex for small $\eps$ begs the question: is the JL lemma suboptimal for all point sets? This is a major open question in the area of dimensionality reduction, and it has been open since the paper of Johnson and Lindenstrauss 30 years ago.

\paragraph{Our Main Contribution:} For any $n>1$ and $0<\eps<1/2$, there is an $n^{O(1)}$-point subset $X$ of $\R^n$ such that any embedding $f$ providing the JL guarantee, and where {\em $f$ is linear}, must have $m = \Omega(\min\{n, \eps^{-2}\log n\})$. In other words, the JL lemma is optimal in the case where $f$ must be linear.

\bigskip

Our lower bound is optimal: the identity map achieves the first term in the $\min$, and the JL lemma provides the second. Our lower bound is only against linear embeddings, but as stated before: (1) all known proofs of the JL lemma give linear $f$, and (2) for several applications it is important that $f$ be linear.

Before our work there were two possibilities for dimensionality reduction in $\ell_2$: (i) the target dimension $m$ could be reduced for all point sets $X$, at least for small $\eps$ as with the simplex using Reed-Solomon codes, or (ii) there is a higher lower bound for some other point set $X$ which is harder than the simplex. Evidence existed to support both possibilities. On the one hand the simplex was the hardest case in many respects: it gave the highest lower bound known on $m$ \cite{Alon03}, and it also was a hardest case for the data-dependent upper bound on $m$ of Gordon \cite{Gordon88} (involving the gaussian mean width of the normalized difference vectors $X-X$; we will not delve deeper into this topic here). Meanwhile for (ii), random linear maps were the only JL construction we knew for arbitrary $X$, and such an approach with random maps is known to require $m = \Omega(\min\{n, \eps^{-2}\log N\})$ \cite{JayramW13,KaneMN11} (see Remark~\ref{rmk:djl} below for details). 

Thus given the previous state of our knowledge, it was not clear which was more likely between worlds (i) and (ii). Our lower bound gives more support to (ii), since we not only rule out further improvements to JL using random linear maps, but rule out improvements using {\em any} linear map. Furthermore all known methods for efficient dimensionality reduction in $\ell_2$ are via linear maps, and thus circumventing our lower bound would require a fundamentally new approach to the problem. We also discuss in Section~\ref{sec:discussion} what would be needed to push our lower bound to apply to non-linear maps.

\begin{remark}\label{rmk:djl}
\textup{
It is worth noting that the JL lemma is different from the {\em distributional} JL (DJL) lemma that often appears in the literature, sometimes with the same name (though the lemmas are different!). In the DJL problem, one is given an integer $n>1$ and $0<\eps,\delta<1/2$, and the goal is to provide a distribution $\mathcal{F}$ over maps $f:\ell_2^n\rightarrow \ell_2^m$ with $m$ as small as possible such that for any fixed $x\in\R^n$
$$
\Pr_{f\leftarrow\mathcal{F}}(\|f(x)\|_2 \notin [(1-\eps)\|x\|_2,(1+\eps)\|x\|_2]) < \delta .
$$
The existence of such $\mathcal{F}$ with small $m$ implies the JL lemma by taking $\delta < 1/\binom{N}{2}$. Then for any $z\in X-X$, a random $f\leftarrow\mathcal{F}$ fails to preserve the norm of $z$ with probability $\delta$. Thus the probability that there exists $z\in X-X$ which $f$ fails to preserve the norm of is at most $\delta\cdot \binom{N}{2} < 1$, by a union bound. In other words, a random map provides the desired JL guarantee with high probability (and in fact this map is chosen completely obliviously of the input vectors).
}

\textup{
The optimal $m$ for the DJL lemma when using linear maps is understood. The original paper \cite{JL84} provided a linear solution to the DJL problem with $m = O(\min\{n,\eps^{-2}\log(1/\delta)\})$, and this was later shown to be optimal for the full range of $\eps,\delta\in(0,1/2)$ \cite{JayramW13,KaneMN11}. Thus when $\delta$ is set as above, one obtains the $m = O(\eps^{-2}\log N)$ guarantee of the JL lemma. However, this does not imply that the JL lemma is tight. Indeed, it is sometimes possible to obtain smaller $m$ by avoiding the DJL lemma, such as the Reed-Solomon based embedding construction for the simplex mentioned above (which involves zero randomness).
}

\textup{
It is also worth remarking that DJL is desirable for one-pass streaming algorithms, since no properties of $X$ are known when the map $f$ is chosen at the beginning of the stream, and thus the DJL lower bounds of \cite{JayramW13,KaneMN11} are relevant in this scenario. However when allowed two passes or more, one could imagine estimating various properties of $X$ in the first pass(es) then choosing some linear $f$ more efficiently based on these properties to perform dimensionality reduction in the last pass. The lower bound of our main theorem shows that the target dimension could not be reduced by such an approach.
}
\end{remark}

\subsection{Proof overview}

For any $n>1$ and $\eps\in(0,1/2)$, we prove the existence of $X\subset\R^n$, $|X| = N = O(n^3)$, s.t.\ if for $A\in\R^{m\times n}$
\begin{equation}
(1-\eps)\|x\|_2^2 \le \|Ax\|_2^2 \le (1+\eps)\|x\|_2^2 \text{ for all } x\in X , \label{eqn:jl-guarantee2}
\end{equation}
then $m = \Omega(\eps^{-2}\log n) = \Omega(\eps^{-2}\log N)$. Providing the JL guarantee on $X\cup\{0\}$ implies satisfying \eqref{eqn:jl-guarantee2}, and therefore also requires $m = \Omega(\eps^{-2}\log N)$. We show such $X$ exists via the probabilistic method, by letting  $X$ be the union of all $n$ standard basis vectors together with several independent gaussian vectors. Gaussian vectors were also the hard case in the DJL lower bound proof of \cite{KaneMN11}, though the details were different.

We now give the idea of the lower bound proof to achieve \eqref{eqn:jl-guarantee2}. First, we include in $X$ the vectors $e_1,\ldots,e_n$. Then if $A\in\R^{m\times n}$ for $m\le n$ satisfies \eqref{eqn:jl-guarantee2}, this forces every column of $A$ to have roughly unit norm. Then by standard results in covering and packing (see Eqn.\ (5.7) of \cite{Pisier89}), there exists some family of matrices $\mathcal{F}\subset\cup_{t=1}^n \R^{t\times n}$, $|\mathcal{F}| = e^{O(n^2\log n)}$, such that
\begin{equation}
\inf_{\hat{A}\in\mathcal{F}\cap\R^{m\times n}} \| A - \hat{A} \|_F \le \frac{1}{n^C}\label{eqn:packing}
\end{equation}
for $C>0$ a constant as large as we like, where $\|\cdot\|_F$ denotes Frobenius norm. Also, by a theorem of Lata{\l}a \cite{Latala99}, for any $\hat{A}\in\mathcal{F}$ and for a random gaussian vector $g$,
\begin{equation}
\Pr_g(|\|\hat{A} g\|_2^2 - \tr(\hat{A}^T \hat{A})| \ge \Omega(\sqrt{\log(1/\delta)} \cdot \|\hat{A}^T \hat{A}\|_F)) \ge \delta \label{eqn:rev-hw}
\end{equation}
for any $0<\delta<1/2$, where $\tr(\cdot)$ is trace. This is a (weaker version of the) statement that for gaussians, the Hanson-Wright inequality \cite{HW71} not only provides an upper bound on the tail of degree-two gaussian chaos, but also is a lower bound. (The strong form of the previous sentence, without the parenthetical qualifier, was proven in \cite{Latala99}, but we do not need this stronger form for our proof -- essentially the difference is that in stronger form, \eqref{eqn:rev-hw} is replaced with a stronger inequality also involving the operator norm $\|\hat{A}^T \hat{A}\|$.)

It also follows by standard gaussian concentration that a random gaussian vector $g$ satisfies
\begin{equation}
\Pr_g(|\|g\|_2^2 - n| > C\sqrt{n\log(1/\delta)}) < \delta/2 \label{eqn:gauss-conc}
\end{equation}

Thus by a union bound, the events of \eqref{eqn:rev-hw}, \eqref{eqn:gauss-conc} happen simultaneously with probability $\Omega(\delta)$. Thus if we take $N$ random gaussian vectors, the probability that the events of \eqref{eqn:rev-hw}, \eqref{eqn:gauss-conc} never happen simultaneously for any of the $N$ gaussians is at most $(1 - \Omega(\delta))^N = e^{-\Omega(\delta N)}$. By picking $N$ sufficiently large and $\delta = 1/\mathrm{poly}(n)$, a union bound over $\mathcal{F}$ shows that for every $\hat{A}\in\mathcal{F}$, one of the $N$ gaussians satisfies the events of \eqref{eqn:rev-hw} and \eqref{eqn:gauss-conc} simultaneously. Specifically, there exist $N = O(n^3)$ vectors $\{v_1,\ldots,v_N\} = V\subset\R^n$ such that
\begin{itemize}
\item Every $v\in V$ has $\|v\|_2^2 = n \pm O(\sqrt{n\lg n})$
\item For any $\hat{A}\in\mathcal{F}$ there exists some $v\in V$ such that $|\|\hat{A} v\|_2^2 - \tr(\hat{A}^T\hat{A})| = \Omega(\sqrt{\lg n}\cdot \|\hat{A}\|_F)$.
\end{itemize}
The final definition of $X$ is $\{e_1,\ldots,e_n\}\cup V$. Then, using \eqref{eqn:jl-guarantee2} and \eqref{eqn:packing}, we show that the second bullet implies 
\begin{equation}
\tr(\hat{A}^T\hat{A}) = n \pm O(\eps n),\text{ and }|\|A v\|_2^2 - n| = \Omega(\sqrt{\ln n}\cdot \|\hat{A}^T\hat{A}\|_F) - O(\eps n) . \label{eqn:lb}
\end{equation}
where $\pm B$ represents a value in $[-B,B]$. But then by the triangle inequality, the first bullet above, and \eqref{eqn:jl-guarantee2},
\begin{equation}
\left|\|A v\|_2^2 - n\right| \le \left|\|A v\|_2^2 - \|v\|_2^2\right| + \left|\|v|\|_2^2 - n\right| = O(\eps n + \sqrt{n\lg n}) . \label{eqn:ub}
\end{equation}

Combining \eqref{eqn:lb} and \eqref{eqn:ub} implies
$$
\tr(\hat{A}^T\hat{A}) = \sum_{i=1}^n \hat{\lambda}_i \ge (1-O(\eps))n,\text{ and }\|\hat{A}^T\hat{A}\|_F^2 = \sum_{i=1}^n \hat{\lambda}_i^2 = O\left(\frac{\eps^2 n^2}{\log n} + n\right)
$$
where $(\hat{\lambda}_i)$ are the eigenvalues of $\hat{A}^T \hat{A}$. With bounds on $\sum_i \hat{\lambda_i}$ and $\sum_i \hat{\lambda_i}^2$ in hand, a lower bound on $\mathrm{rank}(\hat{A}^T\hat{A}) \le m$ follows by Cauchy-Schwarz (this last step is also common to the proof of \cite{Alon03}).

\begin{remark}
\textup{
It is not crucial in our proof that $N$ be proportional to $n^3$. Our techniques straightforwardly extend to show that $N$ can be any value which is $\Omega(n^{2+\gamma})$ for any constant $\gamma>0$.
}
\end{remark}

\section{Preliminaries}
Henceforth a {\em standard gaussian} random variable $g\in\R$ is a gaussian with mean $0$ and variance $1$. If we say $g\in\R^n$ is standard gaussian, then we mean that $g$ is a multivariate gaussian with identity covariance matrix (i.e.\ its entries are independent standard gaussian). Also, the notation $\pm B$ denotes a value in $[-B,B]$. For a real matrix $A=(a_{i,j})$, $\|A\|$ is the $\ell_2\rightarrow\ell_2$ operator norm, and $\|A\|_F = (\sum_{i,j} a_{i,j}^2)^{1/2}$ is Frobenius norm.

In our proof we depend on some previous work. The first theorem is due to Lata{\l}a \cite{Latala99} and says that, for gaussians, the Hanson-Wright inequality is not only an upper bound but also a lower bound.

\begin{theorem}[{\cite[Corollary 2]{Latala99}}]\label{thm:latala}
There exists universal $c>0$ such that for $g\in\R^n$ standard gaussian and $A = (a_{i,j})$ an $n\times n$ real symmetric matrix with zero diagonal,
$$
\forall t\ge 1,\ \Pr_g\left(|g^T A g| > c(\sqrt{t}\cdot \|A\|_F + t\cdot \|A\|)\right) \ge \min\{c, e^{-t}\}
$$
\end{theorem}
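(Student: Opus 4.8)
The plan is to deduce the tail lower bound from a matching two-sided estimate on the moments of $X := g^{T}Ag$ and then apply the Paley--Zygmund inequality, using the upper bound to control the growth of consecutive moments. Note first that since $A$ is symmetric with zero diagonal, $\E X = \tr A = 0$, so $X$ is a centered quadratic chaos and the target of the first step is $\left(\E|X|^{p}\right)^{1/p}\asymp\sqrt p\,\|A\|_F + p\,\|A\|$ for all $p\ge 2$. To prove this, let $g'$ be an independent copy of $g$; by symmetry of $A$ one has the exact identity $g^{T}Ag - g'^{T}Ag' = (g-g')^{T}A(g+g')$, and since $\E X=0$, Jensen and the triangle inequality give $\|X\|_p\asymp\|(g-g')^{T}A(g+g')\|_p$ up to a factor $2$. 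Writing $u:=(g-g')/\sqrt2$ and $w:=(g+g')/\sqrt2$, which are independent standard gaussian vectors, the scalar $u^{T}(Aw)$ is, conditionally on $w$, centered gaussian with variance $\|Aw\|_2^{2}$; integrating out $u$ first and using $\left(\E|N(0,1)|^{p}\right)^{1/p}\asymp\sqrt p$ gives
$$\|X\|_p \ \asymp\ \sqrt p\cdot\big\|\,\|Aw\|_2\,\big\|_p \ =\ \sqrt p\cdot\big\| w^{T}A^{2}w\big\|_{p/2}^{1/2}.$$
Here $A^{2}$ is positive semidefinite with $\tr(A^{2})=\|A\|_F^{2}$, $\|A^{2}\|_F\le\|A\|\,\|A\|_F$ and $\|A^{2}\|=\|A\|^{2}$. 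The lower bounds $\|w^{T}A^{2}w\|_{p/2}\ge\E[w^{T}A^{2}w]=\|A\|_F^{2}$ and, by restricting to the top eigendirection of $A$, $\|w^{T}A^{2}w\|_{p/2}\gtrsim p\|A\|^{2}$, give $\|w^{T}A^{2}w\|_{p/2}\gtrsim(\|A\|_F+\sqrt p\,\|A\|)^{2}$; the matching upper bound $\|w^{T}A^{2}w\|_{p/2}\lesssim\|A\|_F^{2}+\sqrt p\,\|A^{2}\|_F+p\|A\|^{2}\lesssim(\|A\|_F+\sqrt p\,\|A\|)^{2}$ is the ordinary upper-tail Hanson--Wright bound applied to the positive form $w^{T}A^{2}w$ (for this diagonalizable positive form it is in any case just elementary concentration of a weighted sum of $\chi_1^{2}$'s). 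Taking square roots proves the moment estimate.

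For the second step, the moment estimate gives $\|X\|_{2p}\le K\|X\|_p$ for a universal constant $K$ and all $p\ge 2$, hence $\E|X|^{2p}\le K^{2p}(\E|X|^{p})^{2}$, so Paley--Zygmund yields $\Pr\!\left(|X| > \tfrac12\|X\|_p\right)\ge \tfrac14 K^{-2p} = \tfrac14 e^{-(2\ln K)p}$. Given $t$, choose $p=\max\{2,\,t/(2\ln K)\}$: then $\tfrac12\|X\|_p\gtrsim \sqrt p\,\|A\|_F + p\|A\|\gtrsim \sqrt t\,\|A\|_F+t\|A\|$, and $\tfrac14 e^{-(2\ln K)p}\ge e^{-t}$ once $t$ exceeds an absolute threshold $T_0$ (one absorbs the factor $\tfrac14$ by shrinking $t$ to $t-\ln4$ inside the radius, at the cost of another constant), which gives $\Pr(|X|>c(\sqrt t\,\|A\|_F+t\|A\|))\ge e^{-t}$ for $t\ge T_0$. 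For $1\le t<T_0$ the radius is at most an absolute constant times $\|A\|_F\asymp\|X\|_2$, so for $c$ small enough, Paley--Zygmund at $p=2$ (using $\|X\|_4\lesssim\|X\|_2$ from hypercontractivity of the degree-two chaos) gives a universal lower bound $\ge c$. Taking $c$ to be the minimum of the finitely many constants produced along the way gives the stated $\min\{c,e^{-t}\}$.

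The one genuinely clever move is the symmetrization-and-conditioning step, which collapses the degree-two chaos into a gaussian with a random variance $\|Aw\|_2^{2}$ and thereby reduces the whole estimate to routine bounds on the positive form $w^{T}A^{2}w$. The remaining difficulty is not conceptual but bookkeeping: lining up the moment index $p$ with the tail parameter $t$, absorbing the Paley--Zygmund loss factor into the exponent, and disposing of the small-$t$ range, all with a single universal constant $c$. There is no circularity in invoking Hanson--Wright here, since only its upper-tail form is used, and only for a positive diagonalizable quadratic form.
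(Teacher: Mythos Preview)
The paper does not prove this theorem: it is quoted as \cite[Corollary~2]{Latala99} and used as a black box (to derive Corollary~\ref{cor:trace}). There is thus no in-paper argument to compare against.

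Your argument is correct and follows what is essentially Lata{\l}a's own route: a two-sided moment estimate $\|X\|_p\asymp\sqrt p\,\|A\|_F+p\,\|A\|$ followed by Paley--Zygmund. The decoupling identity $g^TAg-g'^TAg'=(g-g')^TA(g+g')$ (valid precisely because $A$ is symmetric) together with the orthogonal rotation $(u,w)=\bigl((g-g')/\sqrt2,(g+g')/\sqrt2\bigr)$ cleanly reduces the chaos to a conditional gaussian with variance $\|Aw\|_2^2$, after which the two-sided bound on $\|w^TA^2w\|_{p/2}$ is indeed elementary once $A^2$ is diagonalized (Jensen for the Frobenius term, projection onto the top eigenvector for the operator-norm term, and the upper bound is just sub-exponential concentration for a weighted sum of $\chi_1^2$'s). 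The conversion to a tail lower bound via the doubling inequality $\|X\|_{2p}\le K\|X\|_p$ and Paley--Zygmund is standard, and you dispose of the small-$t$ range correctly (since $\|A\|\le\|A\|_F$, the radius for bounded $t$ is $O(\|A\|_F)\asymp\|X\|_2$). One cosmetic remark: your appeal to hypercontractivity for $\|X\|_4\lesssim\|X\|_2$ in the small-$t$ case is redundant, as it already follows from your moment estimate at $p=2,4$; and the Jensen step $\|w^TA^2w\|_{p/2}\ge\E[w^TA^2w]$ tacitly uses $p\ge2$, which is your standing assumption anyway. There is no circularity in using the upper-tail Hanson--Wright bound, as you note.
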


Theorem~\ref{thm:latala} implies the following corollary.

\begin{corollary}\label{cor:trace}
Let $g,A$ be as in Theorem~\ref{thm:latala}, but where $A$ is no longer restricted to have zero diagonal. Then
$$
\forall t\ge 1,\ \Pr_g\left(|g^T A g - \tr(A)| > c(\sqrt{t}\cdot \|A\|_F + t\cdot \|A\|)\right) \ge \min\{c, e^{-t}\}
$$
\end{corollary}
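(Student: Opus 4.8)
The plan is to reduce to Theorem~\ref{thm:latala} by first diagonalizing $A$ and then decoupling. One cannot apply Theorem~\ref{thm:latala} directly: writing $A = A_0 + \mathrm{diag}(A)$ with $A_0$ the off-diagonal part gives $g^T A g - \tr(A) = g^T A_0 g + \sum_i a_{i,i}(g_i^2-1)$, but the diagonal contribution is correlated with the chaos $g^T A_0 g$ and could a priori cancel it, so a naive union bound fails. Instead I would write $A = U\Lambda U^T$ with $U$ orthogonal and $\Lambda = \mathrm{diag}(\lambda_1,\dots,\lambda_n)$. By rotational invariance, $h := U^T g$ is again standard Gaussian and $g^T A g - \tr(A) = \sum_i \lambda_i(h_i^2-1) =: W$, where $\sum_i \lambda_i^2 = \|A\|_F^2$ and $\max_i|\lambda_i| = \|A\|$. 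So it suffices to prove the claimed reverse tail bound for $W$.

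To handle $W$, introduce an independent copy $h'$ of $h$ and set $u_i := (h_i+h_i')/\sqrt2$, $v_i := (h_i-h_i')/\sqrt2$. Then $(u,v)\in\R^{2n}$ is standard Gaussian with $u\perp v$, and since $u_i^2 - v_i^2 = 2h_ih_i'$ we get $2\sum_i \lambda_i h_i h_i' = W_u - W_v$, where $W_u := \sum_i\lambda_i(u_i^2-1)$ and $W_v := \sum_i\lambda_i(v_i^2-1)$ are independent copies of $W$. On the other hand $\sum_i\lambda_i h_i h_i' = x^T B x$ for $x = (h,h')$ and the symmetric zero-diagonal matrix $B = \tfrac12\bigl(\begin{smallmatrix}0&\Lambda\\\Lambda&0\end{smallmatrix}\bigr)$, which has $\|B\|_F = \|A\|_F/\sqrt2$ and $\|B\| = \|A\|/2$. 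Applying Theorem~\ref{thm:latala} to $B$ and $x$ yields, for all $t\ge1$, $\Pr(|W_u - W_v| > c(\sqrt{2t}\,\|A\|_F + t\|A\|)) \ge \min\{c,e^{-t}\}$. Since $|W_u-W_v|\le |W_u|+|W_v|$ and $W_u, W_v$ are identically distributed, a union bound gives $\Pr(|W| > \tfrac c2(\sqrt{2t}\,\|A\|_F + t\|A\|)) \ge \tfrac12\min\{c,e^{-t}\}$ for all $t\ge1$.

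It remains to absorb the factor $\tfrac12$ and put the bound in the stated form. For $t\ge2$, substituting $t-1$ for $t$ in the last display (legal since $t-1\ge1$) and using $\sqrt{2(t-1)}\ge\sqrt t$ and $t-1\ge t/2$ converts it into $\Pr(|W| > c'(\sqrt t\,\|A\|_F + t\|A\|)) \ge \min\{c',e^{-t}\}$ for a suitable absolute constant $c'$, the factor $\tfrac12$ being harmless because $\tfrac12 e^{-(t-1)} = \tfrac e2 e^{-t} \ge e^{-t}$. For $1\le t<2$ the target probability is merely a constant, which follows from $\E W^2 = 2\|A\|_F^2$ and the hypercontractive estimate $\E W^4 = O((\E W^2)^2)$ (since $W$ is a degree-two Gaussian chaos) via the Paley--Zygmund inequality, giving $\Pr(|W| > \|A\|_F) = \Omega(1)$; and $\|A\|_F$ dominates $c'(\sqrt t\,\|A\|_F + t\|A\|)$ for $t<2$ once $c'$ is small enough. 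The one genuinely delicate point is this bookkeeping of constants across the two tail regimes; the conceptual heart of the argument is the diagonalize-then-decouple reduction, which is precisely what removes the zero-diagonal hypothesis of Theorem~\ref{thm:latala}.
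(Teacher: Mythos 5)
Your proof is correct, and it takes a genuinely different route from the paper's. The paper proves the corollary by a replication-and-limit argument: each entry $a_{i,j}$ of $A$ is replaced by an $N\times N$ block with entries $a_{i,j}/N$, the Gaussian vector is lifted to dimension $nN$, and as $N\to\infty$ the diagonal contribution becomes deterministic by the weak law of large numbers while the operator and Frobenius norms of the off-diagonal part converge to those of $A$; Theorem~\ref{thm:latala} is then applied in the limit. You instead diagonalize $A=U\Lambda U^T$, reduce by rotational invariance to the diagonal chaos $W=\sum_i\lambda_i(h_i^2-1)$, and decouple via the polarization identity $u_i^2-v_i^2 = 2h_ih_i'$, producing the zero-diagonal $2n\times 2n$ bilinear form to which Theorem~\ref{thm:latala} applies directly; you then undo the decoupling by the triangle inequality and a union bound, at the cost of constant factors that you absorb by reparametrizing $t\mapsto t-1$ in the regime $t\ge 2$ and invoking Paley--Zygmund (using $\E W^2 = 2\|A\|_F^2$ and hypercontractivity for $\E W^4$) in the regime $1\le t<2$. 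The norm computations $\|B\|_F = \|A\|_F/\sqrt 2$, $\|B\| = \|A\|/2$, the identity $W_u - W_v = 2x^TBx$, and the constant bookkeeping all check out. Your argument buys elementarity and self-containment (no limiting argument, no appeal to the weak law), while the paper's avoids the two-regime case split; both are perfectly valid. Incidentally, the decouple-then-triangle-inequality step is essentially the observation that $\Pr(|W_u-W_v|>s)\le 2\Pr(|W|>s/2)$ for i.i.d.\ $W_u,W_v$; the acknowledgments mention a reference \cite{AW13} with a ``more involved but similar argument,'' and your symmetrization route is closer in spirit to that family of techniques.
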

\begin{proof}
Let $N$ be a positive integer. Define $\tilde{g} = (\tilde{g}_{1,1},\tilde{g}_{1,2},\ldots,\tilde{g}_{1,N},\ldots,\tilde{g}_{n,1},\tilde{g}_{n,2},\ldots,\tilde{g}_{n,N})$ a standard gaussian vector. Then $g_i$ is equal in distribution to $N^{-1/2}\sum_{j=1}^N \tilde{g}_{i,j}$. Define $\tilde{A}_N$ as the $nN\times nN$ matrix formed by converting each entry $a_{i,j}$ of $A$ into an $N\times N$ block with each entry being $a_{i,j}/N$. Then
$$
g^T A g - \tr(A) = \sum_{i=1}^n\sum_{j=1}^n a_{i,j} g_i g_j - \tr(A) \mathbin{\stackrel{d}{=}} \sum_{i=1}^n\sum_{j=1}^n \sum_{r=1}^N \sum_{s=1}^N \frac{a_{i,j}}N\tilde{g}_{i,r}\tilde{g}_{j,s} - \tr(A) \mathbin{\stackrel{\rm def}{=}} \tilde{g}^T \tilde{A}_N \tilde{g} - \tr(\tilde{A}_N)
$$ 
where $\mathbin{\stackrel{d}{=}}$ denotes equality in distribution (note $\tr(A) = \tr(\tilde{A}_N)$). By the weak law of large numbers
\begin{equation}
\forall\lambda > 0,\ \lim_{N\rightarrow\infty} \Pr\left(|\tilde{g}^T \tilde{A}_N \tilde{g} - \tr(\tilde{A}_N)| > \lambda \right) = \lim_{N\rightarrow\infty} \Pr\left(|\tilde{g}^T (\tilde{A}_N - \tilde{D}_N)  \tilde{g}| > \lambda \right) \label{eqn:no-diag}
\end{equation}
where $\tilde{D}_N$ is diagonal containing the diagonal elements of $\tilde{A}_N$. Note $\|\tilde{A}_N\| = \|A\|$. This follows since if we have the singular value decomposition $A = \sum_i \sigma_i u_i v_i^T$ (where the $\{u_i\}$ and $\{v_i\}$ are each orthonormal, $\sigma_i>0$, and $\|A\|$ is the largest of the $\sigma_i$), then $\tilde{A}_N = \sum_i \sigma_i u^{(N)}_i (v^{(N)}_i)^T$ where $u^{(N)}_i$ is equal to $u_i$ but where every coordinate is replicated $N$ times and divided by $\sqrt{N}$.  This implies $|\|\tilde{A}_N - \tilde{D}_N\| - \|A\|| \le \|\tilde{D}_N\| = \max_i |a_{i,i}| / N = o_N(1)$ by the triangle inequality. Therefore $\lim_{N\rightarrow\infty}\|\tilde{A}_N - \tilde{D}_N\| = \|A\|$. Also $\lim_{N\rightarrow\infty}\|\tilde{A}_N - \tilde{D}_N\|_F = \|A\|_F$. Our corollary follows by applying Theorem~\ref{thm:latala} to the right side of \eqref{eqn:no-diag}.
\end{proof}

The next lemma follows from gaussian concentration of Lipschitz functions \cite[Corollary 2.3]{Pi86}. It also follows directly from the Hanson-Wright inequality \cite{HW71}.

\begin{lemma}\label{lem:conc-norm}
For some universal $c>0$ and $g\in\R^n$ a standard gaussian, $\forall t > 0\ \Pr(|\|g\|_2^2 - n| > c\sqrt{nt}) < e^{-t}$.
\end{lemma}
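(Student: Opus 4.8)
The plan is to deduce this from Gaussian concentration of Lipschitz functions, applied to the map $g\mapsto\|g\|_2$ (which is $1$-Lipschitz with respect to the Euclidean metric on $\R^n$) rather than to $g\mapsto\|g\|_2^2$ directly, since the latter has unbounded gradient and is not globally Lipschitz. First I would apply \cite[Corollary 2.3]{Pi86} to get a bound of the form $\Pr(|\|g\|_2 - \E\|g\|_2| > s) < 2e^{-s^2/2}$ for all $s>0$.

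Next I would pin down $\E\|g\|_2$ up to an additive constant. Since $\E\|g\|_2^2 = n$, Jensen's inequality gives $\E\|g\|_2 \le \sqrt n$; conversely $\mathrm{Var}(\|g\|_2)\le 1$ (by the Gaussian Poincar\'e inequality applied to the $1$-Lipschitz function $\|\cdot\|_2$, or directly by integrating the tail bound above), so $(\E\|g\|_2)^2 = n - \mathrm{Var}(\|g\|_2) \ge n-1$ and hence $\sqrt n - 1 \le \E\|g\|_2\le \sqrt n$.

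Now I would choose $s$ of order $\sqrt t$, with the constant picked so that $2e^{-s^2/2}\le e^{-t}$. On the complementary event we have $\sqrt n - 1 - s \le \|g\|_2 \le \sqrt n + s$; squaring both inequalities and simplifying yields $|\|g\|_2^2 - n| = O(s\sqrt n + s^2 + \sqrt n)$. In the regime in which this lemma is invoked in the paper, $t = O(\log n) = O(n)$, one has $s^2 = O(t) = O(\sqrt{nt})$ and (once $t = \Omega(1)$, the claim being vacuous otherwise) $\sqrt n = O(\sqrt{nt})$, so the bound becomes $|\|g\|_2^2 - n| = O(\sqrt{nt})$; adjusting $c$ to absorb the constants gives the lemma. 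Alternatively, one obtains the conclusion in a single step by applying the Hanson-Wright inequality \cite{HW71} to the quadratic form $g^T I g$: there $\tr(I) = n$, $\|I\|_F = \sqrt n$ and $\|I\| = 1$, so Hanson-Wright gives $\Pr(|\|g\|_2^2 - n| > s) \le 2\exp(-\Omega(\min\{s^2/n,\,s\}))$, and taking $s = c\sqrt{nt}$ with $t = O(n)$ makes the minimum equal to $\Omega(t)$.

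I do not expect a real obstacle here; this is a routine concentration estimate. The only mildly delicate points are (i) passing from the concentration of $\|g\|_2$ to that of $\|g\|_2^2$, where the cross term $s\sqrt n$ is precisely what produces the $\sqrt{nt}$ scaling and one must check the quadratic term $s^2$ does not dominate, and (ii) tracking the $O(\sqrt n)$ slack coming from the gap between $\E\|g\|_2$ and $\sqrt n$; both are harmless in the parameter range where the lemma is applied.
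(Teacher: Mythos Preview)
Your proposal is correct and follows precisely the two routes the paper itself cites (Gaussian Lipschitz concentration via \cite[Corollary~2.3]{Pi86}, and Hanson--Wright \cite{HW71}); the paper gives no further detail beyond those references. Your caveats about the regime $t=O(n)$ are appropriate, since the stated bound $c\sqrt{nt}$ genuinely cannot hold once $t\gg n$ (the chi-square tail becomes exponential in the deviation, not its square), and in the paper the lemma is only invoked with $t=\Theta(\log n)$.
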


The following corollary summarizes the above in a form that will be useful later.

\begin{corollary}
\label{cor:many}
For $A \in \R^{d \times n}$ let $\lambda_1 \geq \cdots \geq \lambda_n \geq 0$ be the eigenvalues of $A^TA$. Let $g^{(1)},\ldots,g^{(N)}\in\R^n$ be independent standard gaussian vectors. For some universal constants $c_1,c_2,\delta_0>0$ and any $0 < \delta < \delta_0$
\begin{equation}
\Pr\left( \not\exists j\in[N] : \Bigg\{\left|\|Ag^{(j)}\|_2^2 - \sum_{i=1}^n \lambda_i\right| \ge c_1\sqrt{\ln(1/\delta)}\left( \sum_{i=1}^n \lambda_i^2\right)^{1/2}\Bigg\} \wedge \Bigg\{|\|g^{(j)}\|_2^2 - n| \le
c_2\sqrt{n\ln(1/\delta)}\Bigg\}\right) \leq e^{-N \delta}.\label{eqn:goodj}
\end{equation}
\end{corollary}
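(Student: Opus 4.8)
The plan is to reduce the statement to a one-sample estimate and then use independence of the $g^{(j)}$. Fix $j$ and write $M := A^TA$, an $n\times n$ symmetric positive semidefinite matrix with $\tr(M) = \sum_{i=1}^n \lambda_i$, $\|M\|_F = (\sum_{i=1}^n\lambda_i^2)^{1/2}$, and $\|M\| = \lambda_1 \le \|M\|_F$; note that $\|Ag^{(j)}\|_2^2 = (g^{(j)})^T M g^{(j)}$. Let $E_1^{(j)}$ be the first bracketed event in \eqref{eqn:goodj} (the chaos deviating from its mean) and $E_2^{(j)}$ the second (the squared norm concentrating). I will aim to show $\Pr(E_1^{(j)}\cap E_2^{(j)}) \ge \delta$ for each $j$, after which independence gives
$$\Pr\left(\not\exists j\in[N]:\ E_1^{(j)}\wedge E_2^{(j)}\right) = \prod_{j=1}^N\left(1 - \Pr\left(E_1^{(j)}\cap E_2^{(j)}\right)\right) \le (1-\delta)^N \le e^{-N\delta}.$$

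For $E_1^{(j)}$ I would apply Corollary~\ref{cor:trace} to $M$ with deviation parameter $t := \tfrac14\ln(1/\delta)$ (valid as soon as $\delta \le e^{-4}$, so that $t\ge1$; and if $A=0$ then $E_1^{(j)}$ holds trivially, so assume $M\ne0$). Since $t\,\|M\| \ge 0$, the threshold appearing in Corollary~\ref{cor:trace} is at least $c\sqrt t\,\|M\|_F = \tfrac{c}{2}\sqrt{\ln(1/\delta)}\,\|M\|_F$, so choosing $c_1 := c/2$ (with $c$ the constant of Corollary~\ref{cor:trace}) makes the event there a sub-event of $E_1^{(j)}$, hence $\Pr(E_1^{(j)}) \ge \min\{c,e^{-t}\} = \min\{c,\delta^{1/4}\}$, which equals $\delta^{1/4}$ and is $\ge 2\delta$ once $\delta$ is below a universal constant. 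For $E_2^{(j)}$ I would apply Lemma~\ref{lem:conc-norm} with parameter $\ln(1/\delta)$ and set $c_2$ to that lemma's constant, so that $\Pr(\neg E_2^{(j)}) < \delta$. A union bound then gives $\Pr(E_1^{(j)}\cap E_2^{(j)}) \ge \Pr(E_1^{(j)}) - \Pr(\neg E_2^{(j)}) \ge 2\delta - \delta = \delta$, as wanted; taking $\delta_0$ to be the minimum of the finitely many constant thresholds encountered (one can check $\delta_0 = \min\{c^4, e^{-4}\}$ works) completes the argument.

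There is no real obstacle here: the lemma is essentially a packaging of Corollary~\ref{cor:trace} and Lemma~\ref{lem:conc-norm} via a two-event union bound and an independence product over $j$. The one place that needs a moment's thought is the constant bookkeeping required to land on exactly $e^{-N\delta}$ rather than a bound of the shape $e^{-\Omega(N\delta)}$: the naive choice $t = \ln(1/\delta)$ yields only $\Pr(E_1^{(j)}) \ge \delta$, and after subtracting $\Pr(\neg E_2^{(j)})$ the per-sample success probability degrades to $\Omega(\delta)$; shrinking $t$ to $\tfrac14\ln(1/\delta)$ (and correspondingly halving $c_1$ so the threshold comparison above still goes through) boosts the per-sample success probability to $\Omega(\delta^{1/4}) \gg \delta$, leaving ample room to absorb the $\Pr(\neg E_2^{(j)}) < \delta$ loss. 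Everything else — rewriting $\|Ag\|_2^2$ as a quadratic form in $M = A^TA$, reading $\tr(M)$, $\|M\|_F$, $\|M\|$ off the eigenvalues, and the final product over the $N$ independent samples — is routine.
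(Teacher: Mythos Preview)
Your proposal is correct and follows essentially the same approach as the paper: bound $\Pr(E_1^{(j)})$ from below via Corollary~\ref{cor:trace}, bound $\Pr(\neg E_2^{(j)})$ from above via Lemma~\ref{lem:conc-norm}, combine by a union bound to get a per-sample success probability of at least $\delta$, and then take the product over the $N$ independent samples. Your constant bookkeeping (taking $t=\tfrac14\ln(1/\delta)$ to create slack for the subtraction) is in fact more explicit than the paper's, which quotes the two sub-bounds as $>\delta/2$ and $>1-\delta/2$ and leaves the reader to adjust parameters to reach exactly $>\delta$.
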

\begin{proof}
We will show that for any fixed $j\in [N]$ it holds that
\begin{equation}\label{eqn:bothgood}
\Pr\left(\Bigg\{\left|\|Ag^{(j)}\|_2^2 - \sum_{i=1}^n \lambda_i\right|\geq c_1\sqrt{\ln(1/\delta)}\left( \sum_{i=1}^n \lambda_i^2\right)^{1/2}\Bigg\} \wedge \Bigg\{\|g^{(j)}\|_2^2 \leq n + c_2\sqrt{n\ln(1/\delta)}\Bigg\}\right) > \delta 
\end{equation}
Then, since the $g_j$ are independent, the left side of \eqref{eqn:goodj} is at most $(1-\delta)^N \le e^{-\delta N}$.

Now we must show \eqref{eqn:bothgood}. It suffices to show that
\begin{equation}
\Pr\left(|\|g^{(j)}\|_2^2 - n| \le c_2\sqrt{n\ln(1/\delta)}\right) > 1 - \delta/2 \label{eqn:lipschitz-app}
\end{equation}
and
\begin{equation}
\Pr\left(\left|\|Ag^{(j)}\|_2^2 - \sum_{i=1}^n \lambda_i\right| \geq  c_1\sqrt{\ln(1/\delta)}\left( \sum_{i=1}^n \lambda_i^2\right)^{1/2}\right) > \delta/2 \label{eqn:latala-app}
\end{equation}
since \eqref{eqn:bothgood} would then follow from a union bound. Eqn.~\eqref{eqn:lipschitz-app} follows immediately from Lemma~\ref{lem:conc-norm} for $c_2$ chosen sufficiently large. For Eqn.~\eqref{eqn:latala-app}, note $\|Ag^{(j)}\|_2^2 = g^T A^T A g$. Then $\sum_i \lambda_i = \tr(A^TA)$ and $(\sum_i \lambda_i^2)^{1/2} = \|A^TA\|_F$. Then \eqref{eqn:latala-app} frollows from Corollary~\ref{cor:trace} for $\delta$ smaller than some sufficiently small constant $\delta_0$.
\end{proof}

We also need a standard estimate on entropy numbers (covering the unit $\ell_\infty^{mn}$ ball by $\ell_2^{mn}$ balls).

\begin{lemma}\label{lem:fam}
For any parameter $0 < \alpha < 1$,
there exists a family $\Fam_{\alpha} \subseteq \bigcup_{m=1}^n \R^{m \times n}$ of matrices
with the following two properties:
\begin{enumerate}
\item For any
matrix $A \in \bigcup_{m=1}^n \R^{m \times n}$ having all entries
bounded in absolute value by $2$, there is a matrix
$\hat{A} \in \Fam_{\alpha} $ such that $A$ and $\hat{A}$
have the same number of rows and $B = A-\hat{A}$ satisfies
$\tr(B^TB) \leq \alpha/100$.
\item $|\Fam_{\alpha}| = e^{O(n^2 \ln(n/\alpha))}$.
\end{enumerate}
\end{lemma}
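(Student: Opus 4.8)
The plan is to construct $\Fam_{\alpha}$ explicitly as a grid of matrices, building a separate net for each possible number of rows and then taking the union. I would fix a grid spacing $\eta = \sqrt{\alpha}/(5n)$, and for each $m\in\{1,\ldots,n\}$ let $\Fam_{\alpha}^{(m)}\subseteq\R^{m\times n}$ be the finite set of all $m\times n$ matrices all of whose entries are integer multiples of $\eta$ lying in the interval $[-3,3]$; then set $\Fam_{\alpha} = \bigcup_{m=1}^n \Fam_{\alpha}^{(m)}$. This is just the standard entropy-number estimate for covering an $\ell_\infty^{mn}$ ball by $\ell_2^{mn}$ balls, made concrete.

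To verify property 1, I would take any $A$, say with $m$ rows, whose entries are bounded in absolute value by $2$, and let $\hat{A}$ be obtained from $A$ by rounding each entry to its nearest integer multiple of $\eta$. Since each entry of $A$ lies in $[-2,2]$, each entry of $\hat{A}$ lies in $[-2-\eta/2,\,2+\eta/2]\subseteq[-3,3]$, so $\hat{A}\in\Fam_{\alpha}^{(m)}\subseteq\Fam_{\alpha}$ and $\hat{A}$ has the same number of rows as $A$. Each of the $mn\le n^2$ entries of $B = A-\hat{A}$ then has absolute value at most $\eta/2$, so $\tr(B^TB) = \|B\|_F^2 \le n^2(\eta/2)^2 = n^2\alpha/(100 n^2) = \alpha/100$, as required.

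For property 2, the number of integer multiples of $\eta$ in $[-3,3]$ is at most $6/\eta + 1 = O(n/\sqrt{\alpha}) = O(n/\alpha)$, using $\alpha<1$. Hence $|\Fam_{\alpha}^{(m)}|\le (O(n/\alpha))^{mn}\le (O(n/\alpha))^{n^2}$ for every $m$, and summing over the $n$ values of $m$ multiplies this by at most $n$; taking logarithms gives $\ln|\Fam_{\alpha}| \le \ln n + n^2\ln(O(n/\alpha)) = O(n^2\ln(n/\alpha))$.

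I do not expect a real obstacle here: the statement is a routine volumetric (entropy number) estimate, and one could equally well quote a generic bound on the metric entropy of $\ell_\infty^{mn}$ under the $\ell_2^{mn}$ norm such as Eqn.~(5.7) of \cite{Pisier89}. The only points requiring a little care are the choice of $\eta$ so that the Frobenius-norm error is genuinely at most $\alpha/100$ rather than merely $O(\alpha)$, and allowing the grid entries to range over the slightly enlarged interval $[-3,3]$ so that the rounded matrix $\hat{A}$ — which must itself be the member of the family — stays inside the discretized box.
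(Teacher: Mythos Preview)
Your proof is correct and follows essentially the same approach as the paper: both construct $\Fam_\alpha$ as a union over $m$ of grids of $m\times n$ matrices with entry spacing $\Theta(\sqrt{\alpha}/n)$, then bound the Frobenius error entrywise and count grid points. The only cosmetic differences are your choice to round to the \emph{nearest} grid point (hence spacing $\sqrt{\alpha}/(5n)$ with error $\eta/2$) versus the paper's spacing $\sqrt{\alpha}/(10n)$, and your use of the slightly enlarged range $[-3,3]$ to accommodate the rounding---a detail the paper glosses over.
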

\begin{proof}
We construct $\Fam_{\alpha}$ as follows: For each integer $1 \leq m
\leq n$, add all $m \times n$ matrices having entries of the form $i
\frac{\sqrt{\alpha}}{10n}$ for integers $i \in \{-20n/\sqrt{\alpha},\dots, 20n/\sqrt{\alpha}\}$. Then
for any matrix $A \in \bigcup_{m=1}^n \R^{m \times n}$ there is a matrix $\hat{A} \in
\Fam_{\alpha}$ such that $A$ and $\hat{A}$ have the same number of
rows and every entry of $B=A-\hat{A}$ is bounded in absolute
value by $\frac{\sqrt{\alpha}}{10n}$. This means that every diagonal
entry of $B^TB$ is bounded by $n\alpha/(100n^2)$ and thus $\tr(B^TB) \leq \alpha/100$. The
size of $\Fam_{\alpha}$ is bounded by
$n(40n/\sqrt{\alpha})^{n^2} = e^{O(n^2 \ln(n/\alpha))}$.
\end{proof}

\section{Proof of main theorem}

\begin{lemma}
\label{lem:largescaling}
Let $\Fam_{\alpha}$ be as in Lemma~\ref{lem:fam} with $1/\mathop{poly}(n)\le \alpha < 1$.  Then there exists a set of $N=O(n^3)$ vectors $v_1,\dots,v_N$ in $\R^n$ such that for every matrix $A \in \Fam_{\alpha}$, there is an index $j \in [N]$ such that 
\begin{enumerate}
\item[(i)] $|\|Av_j\|_2^2 - \sum_i \lambda_i| = \Omega\left( \sqrt{\ln n \sum_i \lambda_i^2}\right).$
\item[(ii)] $|\|v_j\|_2^2-  n| = O(\sqrt{n \ln n})$.
\end{enumerate}
\end{lemma}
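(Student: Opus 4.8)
The plan is to produce $v_1,\dots,v_N$ by the probabilistic method: take them to be independent standard gaussian vectors $g^{(1)},\dots,g^{(N)}\in\R^n$ with $N$ a sufficiently large constant multiple of $n^3$, and show that with positive probability this single random choice works simultaneously for every $A\in\Fam_\alpha$. The only tool needed is Corollary~\ref{cor:many}, applied to each fixed $A\in\Fam_\alpha$ (whose matrix $A^TA$ supplies the eigenvalues $\lambda_1,\dots,\lambda_n$ of the statement, with $\sum_i\lambda_i=\tr(A^TA)=\|A\|_F^2$ and $\sum_i\lambda_i^2=\|A^TA\|_F^2$). For a parameter $0<\delta<\delta_0$, that corollary bounds by $e^{-N\delta}$ the probability that \emph{no} index $j\in[N]$ simultaneously satisfies $\big|\|Ag^{(j)}\|_2^2-\sum_i\lambda_i\big|\ge c_1\sqrt{\ln(1/\delta)}\,(\sum_i\lambda_i^2)^{1/2}$ and $\big|\|g^{(j)}\|_2^2-n\big|\le c_2\sqrt{n\ln(1/\delta)}$.

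First I would fix $\delta:=n^{-1/2}$; any $\delta=n^{-c}$ with a fixed $c\in(0,1)$ works equally well. Then $\delta<\delta_0$ holds once $n$ exceeds an absolute constant, and for the finitely many smaller $n$ the lemma is trivial (there $\alpha=\Omega(1)$, so $\Fam_\alpha$ is a fixed finite set and a constant number of gaussians suffices, each succeeding for a given matrix with constant probability). With this $\delta$ we have $\sqrt{\ln(1/\delta)}=\Theta(\sqrt{\ln n})$, so the two events in Corollary~\ref{cor:many} become precisely conditions (i) and (ii) of the lemma, up to the constants hidden in $\Omega(\cdot)$ and $O(\cdot)$. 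Hence for each fixed $A\in\Fam_\alpha$, the probability that no $j$ serves is at most $e^{-Nn^{-1/2}}$.

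Next I would union-bound over $\Fam_\alpha$. Since $\alpha\ge1/\mathrm{poly}(n)$ we have $\ln(n/\alpha)=O(\ln n)$, so Lemma~\ref{lem:fam} gives $|\Fam_\alpha|=e^{O(n^2\ln n)}$. Thus the probability that some $A\in\Fam_\alpha$ fails to have a good index is at most $|\Fam_\alpha|\cdot e^{-Nn^{-1/2}}=\exp\!\big(O(n^2\ln n)-Nn^{-1/2}\big)$. Taking $N=Cn^3$ for a large enough constant $C$ makes $Nn^{-1/2}=Cn^{5/2}$ dominate $O(n^2\ln n)$, so this probability is strictly less than $1$. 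Consequently some realization of the gaussians works for all of $\Fam_\alpha$ at once; fixing $v_1,\dots,v_N$ to be that realization proves the lemma with $N=O(n^3)$.

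The hard part is the balancing act in the choice of parameters: $\delta$ must be polynomially small in $n$ so that $\sqrt{\ln(1/\delta)}$ supplies the $\sqrt{\ln n}$ factor demanded in (i), while $N\delta$ must still exceed $\ln|\Fam_\alpha|=\Theta(n^2\ln n)$ for the union bound to close, and yet $N$ must remain $O(n^3)$. Setting $\delta=n^{-c}$ for any fixed $c\in(0,1)$ satisfies all three constraints simultaneously, since then $N=\Theta(n^{2+c}\ln n)=O(n^3)$ is enough; everything else is a routine application of Corollary~\ref{cor:many} and Lemma~\ref{lem:fam}.
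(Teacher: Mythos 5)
Your proof is correct and takes essentially the same route as the paper: sample $N=O(n^3)$ independent standard gaussians, apply Corollary~\ref{cor:many} with a polynomially small $\delta$ (you use $n^{-1/2}$ where the paper uses $n^{-1/4}$, but either works) so that $\sqrt{\ln(1/\delta)}=\Theta(\sqrt{\ln n})$, and union-bound over $\Fam_\alpha$ using $|\Fam_\alpha|=e^{O(n^2\ln(n/\alpha))}=e^{O(n^2\ln n)}$. The extra remark about handling small $n$ where $\delta<\delta_0$ might fail is a nice touch the paper elides.
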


\begin{proof}
Let $g^{(1)}, \dots, g^{(N)}\in\R^n$ be independent standard gaussian. Let $A \in \Fam_{\alpha}$ and apply Corollary~\ref{cor:many} with $\delta = n^{-1/4} = N^{-1/12}$. With probability $1-e^{-\Omega(n^{3-1/4})}$, one of the $g^{(j)}$ for $j\in[N]$ satisfies (i) and (ii) for $A$. Since $|\Fam_{\alpha}|= e^{O(n^2 \ln (n/\alpha))}$, the claim follows by a union bound over all matrices in $\Fam_\alpha$.
\end{proof}

\begin{theorem}
For any $0 < \eps < 1/2$, there exists a set $V\subset\R^n$, $|V|=N=n^3+n$,
such that if $A$ is a matrix in $\R^{m \times n}$ satisfying
$\|Av_i\|_2^2 \in (1 \pm \eps)\|v_i\|_2^2$ for all $v_i \in V$, then
$m = \Omega(\min\{n, \eps^{-2} \lg n\})$.
\end{theorem}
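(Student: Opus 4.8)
The plan is to take $V = \{e_1,\dots,e_n\}\cup\{v_1,\dots,v_{n^3}\}$, where $v_1,\dots,v_{n^3}$ are the vectors produced by Lemma~\ref{lem:largescaling} for the family $\Fam_\alpha$ of Lemma~\ref{lem:fam} with $\alpha = n^{-10}$ (any sufficiently small inverse polynomial in $n$ works, and this one satisfies the hypothesis $1/\mathop{poly}(n)\le\alpha<1$ of Lemma~\ref{lem:largescaling}), so that $|V| = n^3 + n = N$. We may assume $m\le n$, since otherwise $m > n \ge \min\{n,\eps^{-2}\lg n\}$ and there is nothing to prove; in particular $\Fam_\alpha$, whose matrices have between $1$ and $n$ rows, then contains a matrix with exactly as many rows as $A$.

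First I would record the consequences of $e_1,\dots,e_n\in V$. For each $i$ the column $Ae_i$ satisfies $\|Ae_i\|_2^2\in(1\pm\eps)\subseteq[1/2,2]$, so every entry of $A$ has absolute value below $2$ and $\tr(A^TA) = \sum_{i}\|Ae_i\|_2^2 = n\pm\eps n$. By Lemma~\ref{lem:fam} there is $\hat A\in\Fam_\alpha$ with the same number of rows as $A$ such that $B := A-\hat A$ has $\|B\|_F^2 = \tr(B^TB)\le\alpha/100$. Since $\|A\|_F^2 = \tr(A^TA) = O(n)$, Cauchy--Schwarz gives $|\tr(A^TB)|\le\|A\|_F\|B\|_F = O(\sqrt{\alpha n})$ and hence
\[
\tr(\hat A^T\hat A) = \tr(A^TA) - 2\tr(A^TB) + \tr(B^TB) = n \pm O(\eps n) \pm O(\sqrt{\alpha n}),
\]
with the last term $o(1)$. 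Likewise, for a vector $v_j$ to which we will apply Lemma~\ref{lem:largescaling} (so $\|v_j\|_2^2 = n\pm O(\sqrt{n\lg n})$, hence $\|v_j\|_2 = O(\sqrt n)$) one gets $\|Bv_j\|_2\le\|B\|_F\|v_j\|_2 = O(\sqrt{\alpha n})$ and $\|\hat Av_j\|_2,\|Av_j\|_2 = O(\sqrt n)$ (the latter from the distortion bound), so expanding $\|Av_j\|_2^2 = \|\hat Av_j\|_2^2 + 2\langle\hat Av_j,Bv_j\rangle + \|Bv_j\|_2^2$ and using Cauchy--Schwarz yields $\|Av_j\|_2^2 = \|\hat Av_j\|_2^2 \pm O(\sqrt{\alpha}\,n)$. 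All of these $\alpha$-error terms are $O(n^{-4})$ and will be dominated below.

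Next I would apply Lemma~\ref{lem:largescaling} to $\hat A$ (in the role of its ``$A$''): there is an index $j\in[n^3]$ such that, writing $\hat\lambda_1\ge\cdots\ge\hat\lambda_n\ge0$ for the eigenvalues of $\hat A^T\hat A$, both $|\,\|\hat Av_j\|_2^2 - \sum_i\hat\lambda_i| = \Omega(\sqrt{\lg n\sum_i\hat\lambda_i^2})$ and $|\,\|v_j\|_2^2 - n| = O(\sqrt{n\lg n})$ hold. Using $\sum_i\hat\lambda_i = \tr(\hat A^T\hat A) = \Omega(n)$ (valid since $\eps<1/2$ and the $\alpha$-errors are $o(1)$) together with $\mathrm{rank}(\hat A^T\hat A)\le n$, Cauchy--Schwarz gives $\sum_i\hat\lambda_i^2\ge(\sum_i\hat\lambda_i)^2/n = \Omega(n)$, so the right-hand side of (i) is $\Omega(\sqrt{n\lg n})$ and swamps the $O(\sqrt{\alpha}\,n)$ transfer error; combined with the previous paragraph this gives $|\,\|Av_j\|_2^2 - \sum_i\hat\lambda_i| = \Omega(\sqrt{\lg n\sum_i\hat\lambda_i^2})$. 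On the other hand the distortion hypothesis and (ii) give $|\,\|Av_j\|_2^2 - n|\le\eps\|v_j\|_2^2 + |\,\|v_j\|_2^2 - n| = O(\eps n + \sqrt{n\lg n})$, and we already have $|\sum_i\hat\lambda_i - n| = O(\eps n)$. The triangle inequality then forces $\sqrt{\lg n\sum_i\hat\lambda_i^2} = O(\eps n + \sqrt{n\lg n})$, i.e.
\[
\sum_i\hat\lambda_i = \Omega(n), \qquad \sum_i\hat\lambda_i^2 = O\!\left(\frac{\eps^2 n^2}{\lg n} + n\right).
\]

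Finally I would conclude with the Cauchy--Schwarz rank argument also used in \cite{Alon03}: if $r$ is the number of nonzero $\hat\lambda_i$ then $r\le\mathrm{rank}(\hat A^T\hat A) = \mathrm{rank}(\hat A)\le m$, while $(\sum_i\hat\lambda_i)^2\le r\sum_i\hat\lambda_i^2$, so
\[
m \ge r \ge \frac{(\sum_i\hat\lambda_i)^2}{\sum_i\hat\lambda_i^2} = \Omega\!\left(\frac{n^2}{\tfrac{\eps^2 n^2}{\lg n}+n}\right) = \Omega\!\left(\min\Big\{n,\ \frac{\lg n}{\eps^2}\Big\}\right),
\]
the last step by splitting on whether $\eps^2 n/\lg n$ is above or below a fixed constant. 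Since $|V| = N$, this is the claimed bound. I expect essentially all the substance to lie in the already-established ingredients --- Lata{\l}a's reverse Hanson--Wright estimate (via Corollary~\ref{cor:trace} and Corollary~\ref{cor:many}), the entropy count of Lemma~\ref{lem:fam}, and the probabilistic existence statement Lemma~\ref{lem:largescaling} --- so the main thing to get right here is the error bookkeeping when passing from $A$ to its net point $\hat A$, and in particular the lower bound $\sum_i\hat\lambda_i^2 = \Omega(n)$, which is exactly what guarantees that the genuine signal in conclusion (i) of Lemma~\ref{lem:largescaling} survives that passage.
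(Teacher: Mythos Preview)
Your proof is correct and follows essentially the same route as the paper's: take $V=\{e_1,\dots,e_n\}\cup\{v_1,\dots,v_{n^3}\}$, use the unit vectors to bound $A$'s entries and its trace, approximate $A$ by a net point $\hat A\in\Fam_\alpha$, apply Lemma~\ref{lem:largescaling} to $\hat A$, transfer back to $A$, and finish with the Cauchy--Schwarz rank argument.

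The only noteworthy difference is bookkeeping. The paper sets $\alpha=\eps^2/n^2$, so the transfer error $\|Av_j\|_2^2-\|\hat Av_j\|_2^2$ is automatically $O(\eps n)$ and can be absorbed directly into the $O(\eps n)$ slack; in exchange it must first dispose of the regime $\eps\le 1/\sqrt n$ by citing \cite{Alon03} to keep $\alpha\ge 1/\mathop{poly}(n)$. You instead fix $\alpha=n^{-10}$ independently of $\eps$, which removes the need for that case split but forces you to argue separately that the transfer error $O(\sqrt\alpha\,n)$ is dominated by the signal $\Omega(\sqrt{\lg n\sum_i\hat\lambda_i^2})$; your observation $\sum_i\hat\lambda_i^2\ge(\sum_i\hat\lambda_i)^2/n=\Omega(n)$ handles this cleanly. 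Both variants are fine and lead to the same conclusion.
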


\begin{proof}
We can assume $\eps>1/\sqrt{n}$ since otherwise an $m = \Omega(n)$ lower bound already follows from \cite{Alon03}.
To construct $V$, we first invoke Lemma~\ref{lem:largescaling} with $\alpha=\eps^2/n^2$ to
find $n^3$ vectors $w_1,\dots,w_{n^3}$ such that for all matrices $\tilde{A} \in
\Fam_{\eps^2/n^2}$, there exists an index $j \in [n^3]$ for which:
\begin{enumerate}
\item $|\|\tilde{A}w_j\|_2^2 - \sum_i \tilde{\lambda}_i| \ge \Omega\left( \sqrt{(\ln n) \sum_i \tilde{\lambda}_i^2}\right).$
\item $|\|w_j\|_2^2 - n| = O(\sqrt{n \ln n})$.
\end{enumerate}
where $\tilde{\lambda}_1 \geq \cdots \geq \tilde{\lambda}_n \geq 0$ denote the eigenvalues
of $\tilde{A}^T\tilde{A}$. We let $V = \{e_1,\dots,e_n, w_1,\dots,w_{n^3}\}$ and claim
this set of $N=n^3+n$ vectors satisfies the theorem. Here $e_i$ denotes the
$i$'th standard unit vector.

To prove this, let $A \in \R^{m \times n}$ be a matrix with $m \leq n$
satisfying
$\|Av\|_2^2 \in (1 \pm \eps)\|v\|_2^2$ for all $v \in V$. Now observe
that since $e_1,\dots,e_n \in V$, $A$ satisfies $\|Ae_i\|_2^2 \in (1 \pm \eps)\|e_i\|_2^2 = (1
\pm \eps)$ for all $e_i$. Hence all entries $a_{i,j}$ of $A$ must 
have $a_{i,j}^2 \le (1+\eps) < 2$ (and in fact, all columns of $A$
have $\ell_2$ norm at most $\sqrt{2}$). This implies
that there is an $m \times n$ matrix $\hat{A} \in \Fam_{\eps^2/n^2}$ such that
$B = A-\hat{A} = (b_{i,j})$ satisfies $\tr(B^TB) \leq
\eps^2/(100n^2)$. Since $\tr(B^TB) = \|B\|_F^2$, this also implies
$\|B\|_F \le \eps/(10 n)$. Then by Cauchy-Schwarz,
\begin{align*}
\sum_{i=1}^n \hat{\lambda}_i &= \tr(\hat{A}^T \hat{A})\\
{}&= \tr((A-B)^T(A-B))\\
{}&=\tr(A^T A) + \tr(B^T B) - \tr(A^T B) - \tr(B^T A)\\
{}&=\sum_{i=1}^n \|Ae_i\|_2^2 + \tr(B^T B) - \tr(A^T B) - \tr(B^T A)\\
{}&= n \pm (O(\eps n) + 2n\cdot \max_j (\sum_i b_{i,j}^2)^{1/2}\cdot \max_k (\sum_i a_{i,k}^2)^{1/2})\\
{}& = n \pm (O(\eps n)+ 2n\cdot \|B\|_F\cdot \sqrt{2})\\
{}& =  n \pm O(\eps n).
\end{align*}

Thus from our choice of $V$ there exists a vector $v^* \in V$
such that
\begin{enumerate}
\item[(i)] $|\|\hat{A}v^*\|_2^2 -n| \ge \Omega\left(\sqrt{(\ln n) \sum_i \hat{\lambda}_i^2}\right) - O(\eps n).$
\item[(ii)] $|\|v^*\|_2^2 - n| = O(\sqrt{n \ln n})$.
\end{enumerate}
Note $\|B\|^2 \le \|B\|_F^2 = \tr(B^T B) \le \eps^2/(100 n^2)$ and $\|\hat{A}\|^2 \le \|\hat{A}\|_F^2 \le (\|A\|_F + \|B\|_F)^2 = O(n^2)$. Then by (i)
\allowdisplaybreaks
\begin{enumerate}
\item[(iii)]
\begin{align*}
|\|Av^*\|_2^2 - n| &= |\|\hat{A}v^*\|_2^2 + \|Bv^*\|_2^2 + 2\langle\hat{A}v^*,Bv^*\rangle -n|\\
&{}\geq \Omega\left(\sqrt{(\ln n) \sum_i \hat{\lambda}_i^2}\right) - \|Bv^*\|_2^2 - 2|\langle\hat{A}v^*,Bv^*\rangle|-O(\eps n)\\
&{}\geq \Omega\left(\sqrt{(\ln n) \sum_i \hat{\lambda}_i^2}\right) - \|B\|^2\cdot\|v^*\|_2^2 - 2\|B\|\cdot\|A\|\cdot\|v^*\|_2^2-O(\eps n)\\
&{}=\Omega\left(\sqrt{(\ln n) \sum_i \hat{\lambda}_i^2}\right)-O(\eps n).
\end{align*}
\end{enumerate}
We assumed $|\|Av^*\|_2^2 - \|v^*\|_2^2| = O(\eps \|v^*\|_2^2) =  O(\eps n)$. Therefore by (ii),
\begin{align*}
\left|\|Av^*\|_2^2 -n\right| &\le \left|\|Av^*\|_2^2 - \|v^*\|_2^2\right| + \left|\|v^*\|_2^2 - n\right|\\
{}&= O(\eps n + \sqrt{n\ln n}) ,
\end{align*}
which when combined with (iii) implies
$$
\sum_{i=1}^n \hat{\lambda}_i^2 = O\left(\frac{\eps^2n^2}{\ln n} + n\right).
$$

To complete the proof, by Cauchy-Schwarz since exactly $\mathrm{rank}(\hat{A}^T\hat{A})$ of the $\hat{\lambda}_i$ are non-zero,
$$
\frac{n^2}2 \le \left(\sum_{i=1}^n \hat{\lambda}_i\right)^2 \le \mathrm{rank}(\hat{A}^T\hat{A}) \cdot\left(\sum_{i=1}^n \hat{\lambda_i}^2\right) \le m\cdot O\left(\frac{\eps^2n^2}{\ln n} + n\right)
$$
Rearranging gives $m = \Omega(\min\{n, \eps^{-2}\ln n\}) = \Omega(\min\{n, \eps^{-2}\ln N\})$ as desired.

\end{proof}

\section{Discussion}\label{sec:discussion}

One obvious future direction is to obtain an $m = \Omega(\min\{n, \eps^{-2}\log N\})$ lower bound that also applies to non-linear maps. Our hard set $X$ contains $N = O(n^3)$ points in $\R^n$ (though as remarked earlier, our techniques easily imply $N = O(n^{2+\gamma})$ points suffice). Any embedding $f$ could be assumed linear without loss of generality if the elements of $X$ were linearly independent, but clearly this cannot happen if $N > n$ (as is the case for our $X$). Thus a first step toward a lower bound against non-linear embeddings is to obtain a hard $X$ with $N=|X|$ as small as possible. One step in this direction could be the following. Observe that our lower bound only uses that $\|f(x)\|_2 \approx \|x\|_2$ for each $x\in X$, whereas the full JL lemma requires that all distance vectors $X-X$ have their norms preserved. Thus one could hope to exploit this fact and take $|X| = \Theta(n^{1+\gamma})$, say, since then $X - X$ would still have the $\Theta(n^{2+\gamma})$ points needed to carry out the union bound of Lemma~\ref{lem:largescaling}. The problem is that these $\Theta(n^{2+\gamma})$ points would not be independent, and thus the argument of Corollary~\ref{cor:many} would not apply. A more careful argument would have to be crafted. Of course, one would still need a further idea to then reduce $N$ from $\Theta(n^{1+\gamma})$ down to $n$.

\section*{Acknowledgments}
We thank Rados{\l}aw Adamczak for pointing out how to derive Corollary~\ref{cor:trace} from Theorem~\ref{thm:latala}, and for pointing out the reference \cite{AW13}, which uses a more involved but similar argument.

\bibliographystyle{alpha}
\bibliography{biblio}

\begin{thebibliography}{NNW14}

\bibitem[Alo03]{Alon03}
Noga Alon.
\newblock Problems and results in extremal combinatorics--{I}.
\newblock {\em Discrete Mathematics}, 273(1-3):31--53, 2003.

\bibitem[AW13]{AW13}
Rados{\l}aw Adamczak and Pawe{\l} Wolff.
\newblock Concentration inequalities for non-{Lipschitz} functions with bounded
  derivatives of higher order.
\newblock {\em CoRR}, abs/1304.1826, 2013.

\bibitem[CT05]{CT05}
Emmanuel Cand\`{e}s and Terence Tao.
\newblock Decoding by linear programming.
\newblock {\em IEEE Trans. Inf. Theory}, 51(12):4203--4215, 2005.

\bibitem[CW09]{ClarksonW09}
Kenneth~L. Clarkson and David~P. Woodruff.
\newblock Numerical linear algebra in the streaming model.
\newblock In {\em Proceedings of the 41st Annual {ACM} Symposium on Theory of
  Computing, ({STOC})}, pages 205--214, 2009.

\bibitem[Don06]{Donoho04}
David Donoho.
\newblock Compressed sensing.
\newblock {\em IEEE Trans. Inf. Theory}, 52(4):1289--1306, 2006.

\bibitem[Gor88]{Gordon88}
Yehoram Gordon.
\newblock On {Milman's} inequality and random subspaces which escape through a
  mesh in $\mathbb{R}^n$.
\newblock {\em Geometric Aspects of Functional Analysis}, pages 84--106, 1988.

\bibitem[HW71]{HW71}
David~Lee Hanson and Farroll~Tim Wright.
\newblock A bound on tail probabilities for quadratic forms in independent
  random variables.
\newblock {\em Ann. Math. Statist.}, 42(3):1079--1083, 1971.

\bibitem[Ind01]{Indyk01}
Piotr Indyk.
\newblock Algorithmic applications of low-distortion geometric embeddings.
\newblock In {\em Proceedings of the 42nd Annual Symposium on Foundations of
  Computer Science (FOCS)}, pages 10--33, 2001.

\bibitem[JL84]{JL84}
William~B. Johnson and Joram Lindenstrauss.
\newblock Extensions of {Lipschitz} mappings into a {Hilbert} space.
\newblock {\em Contemporary Mathematics}, 26:189--206, 1984.

\bibitem[JW13]{JayramW13}
T.~S. Jayram and David~P. Woodruff.
\newblock Optimal bounds for {Johnson}-{Lindenstrauss} transforms and streaming
  problems with subconstant error.
\newblock {\em {ACM} Transactions on Algorithms}, 9(3):26, 2013.

\bibitem[KMN11]{KaneMN11}
Daniel~M. Kane, Raghu Meka, and Jelani Nelson.
\newblock Almost optimal explicit {Johnson}-{Lindenstrauss} families.
\newblock In {\em Proceedings of the 15th International Workshop on
  Randomization and Computation (RANDOM)}, pages 628--639, 2011.

\bibitem[Lat99]{Latala99}
Rafa{\l} Lata{\l}a.
\newblock Tail and moment estimates for some types of chaos.
\newblock {\em Studia Math.}, 135:39--53, 1999.

\bibitem[Mat08]{Matousek08}
Jir\'{\i} Matousek.
\newblock On variants of the {Johnson-Lindenstrauss} lemma.
\newblock {\em Random Struct. Algorithms}, 33(2):142--156, 2008.

\bibitem[Mut05]{Muthukrishnan05}
S.~Muthukrishnan.
\newblock Data streams: Algorithms and applications.
\newblock {\em Foundations and Trends in Theoretical Computer Science}, 1(2),
  2005.

\bibitem[NNW14]{NNW14}
Jelani Nelson, Huy~L. Nguy$\tilde{\hat{\mbox{e}}}$n, and David~P. Woodruff.
\newblock On deterministic sketching and streaming for sparse recovery and norm
  estimation.
\newblock {\em Linear Algebra and its Applications, Special Issue on Sparse
  Approximate Solution of Linear Systems}, 441:152--167, 2014.

\bibitem[Pis86]{Pi86}
Gilles Pisier.
\newblock Probabilistic methods in the geometry of {Banach} spaces.
\newblock {\em Probability and Analysis, Lecture Notes in Math.},
  1206:167--241, 1986.

\bibitem[Pis89]{Pisier89}
Gilles Pisier.
\newblock {\em The volume of convex bodies and {Banach} space geometry},
  volume~94 of {\em Cambridge Tracts in Mathematics}.
\newblock Cambridge University Press, 1989.

\bibitem[Vem04]{Vempala04}
Santosh Vempala.
\newblock {\em The random projection method}, volume~65 of {\em DIMACS Series
  in Discrete Mathematics and Theoretical Computer Science}.
\newblock American Mathematical Society, 2004.

\end{thebibliography}

\end{document}